\def\qk{q^\mathcal{\omega}}
\title{Classification with High-Dimensional\\ Sparse Samples}
\author{\IEEEauthorblockN{Dayu Huang}
\IEEEauthorblockA{Department of Electrical and Computer Engineering\\
University of Illinois at Urbana-Champaign\\
Urbana, Illinois 61801\\
Email: dhuang8@illinois.edu}
\and
\IEEEauthorblockN{Sean Meyn}
\IEEEauthorblockA{Department of Electrical and Computer Engineering\\University of Florida\\
Gainesville,  FL 32611\\
Email: meyn@ece.ufl.edu} \thanks{\footnotesize The authors would like to acknowledge help discussions with Tu\v{g}kan Batu and Aaron Wagner. 

Financial support from the National Science Foundation (NSF CCF
07-29031 and CCF 08-30776), ITMANET DARPA RK 2006-07284 and AFOSR grant FA9550-09-1-0190
is gratefully acknowledged.  Any opinions, findings,
and conclusions or recommendations expressed in this material are
those of the authors and do not necessarily reflect the views of
NSF, DARPA or AFOSR. }}
 \def\epsy{\varepsilon}
\def\Prob{{\sf P}}
\def\Expect{{\sf E}}
\def\limsup{\mathop{\rm lim{\,}sup}}
\def\sq{\hbox{\rlap{$\sqcap$}$\sqcup$}}
\def\qed{\ifmmode\sq\else{\unskip\nobreak\hfil
\penalty50\hskip1em\null\nobreak\hfil\sq
\parfillskip=0pt\finalhyphendemerits=0\endgraf}\fi\medskip}
 \def\FRAC#1#2#3{\genfrac{}{}{}{#1}{#2}{#3}}
\def\ddtp{{\mathchoice{\FRAC{1}{d^{\hbox to 2pt{\rm\tiny +\hss}}}{dt}}%
{\FRAC{1}{d^{\hbox to 2pt{\rm\tiny +\hss}}}{dt}}%
{\FRAC{3}{d^{\hbox to 2pt{\rm\tiny +\hss}}}{dt}}%
{\FRAC{3}{d^{\hbox to 2pt{\rm\tiny +\hss}}}{dt}}}}
\def\half{{\mathchoice{\FRAC{1}{1}{2}}%
{\FRAC{1}{1}{2}}%
{\FRAC{3}{1}{2}}%
{\FRAC{3}{1}{2}}}}
\def\eqdef{\mathbin{:=}}
\newtheorem{theorem}{Theorem}[section]
\newtheorem{proposition}[theorem]{Proposition}
\newtheorem{lemma}[theorem]{Lemma}
\newtheorem{assumption}[theorem]{Assumption}
\def\Lemma#1{Lemma~\ref{t:#1}}
\def\Proposition#1{Proposition~\ref{t:#1}}
\def\Theorem#1{Theorem~\ref{t:#1}}
\def\Assumption#1{Assumption~\ref{as:#1}}
\def\eq#1/{(\ref{e:#1})}
\newcommand{\field}[1]{\mathbb{#1}}
\def\ind{\field{I}}
\newcounter{rmnum}
\newcounter{anum}
\def\til={{\widetilde =}}
\def\clP{{\cal P}}
\def\bfmx{{\mbox{\protect\boldmath$x$}}}
\def\bfmz{{\mbox{\protect\boldmath$z$}}}
\def\bfmy{{\mbox{\protect\boldmath$y$}}}
\def\bfmX{{\mbox{\protect\boldmath$X$}}}
\def\bfmY{{\mbox{\protect\boldmath$Y$}}}
\def\bfmZ{{\mbox{\protect\boldmath$Z$}}}
\newlength{\noteWidth}
\long\def\notes#1{\ifinner
             {\tiny #1}
             \else
             \marginpar{\parbox[t]{\noteWidth}{\raggedright\tiny #1}}
             \fi}
\def\notes#1{}
\def\Ho{H0}
\def\Ha{H1}
\def\mP{\Pi_m}
\def\bmP{\clP([m])}
\def\cJ{I}
\def\mub{\eta}
\def\bphi{\mathbf{\phi}}
\newcounter{assumption}
\newtheorem{assumption0}[assumption]{Assumption}
\begin{document}
%\ninept
%
\maketitle
\begin{abstract}
The task of the binary classification problem is to determine which of two distributions has generated a length-$n$ test sequence. The two distributions are unknown; two training sequences of length $N$, one from each distribution, are observed. The distributions share an alphabet of size $m$, which is significantly larger than $n$ and $N$. How does $N,n,m$ affect the probability of classification error? We characterize the achievable error rate in a high-dimensional setting in which $N,n,m$ all tend to infinity, under the assumption that probability of any symbol is $O(m^{-1})$. The results are:
\begin{enumerate}
\item  There exists an asymptotically consistent classifier if and only if $m=o(\min\{N^2,Nn\})$. This extends the previous consistency result in \cite{keltulwagvis10p1478} to the case $N\neq n$.
\item  For the sparse sample case where $\max\{n,N\}=o(m)$, finer results are obtained: The best achievable probability of error decays as $-\log(P_e)=J \min\{N^2, Nn\}(1+o(1))/m$ with $J>0$.
\item A weighted coincidence-based classifier has non-zero generalized error exponent $J$. 
\item The $\ell_2$-norm based classifier has $J=0$.
\end{enumerate}
\end{abstract}

\begin{keywords}
high-dimensional model, large deviations, classification, sparse sample, generalized error exponent \end{keywords}

\section{INTRODUCTION}

Consider the following binary classification problem: Two training sequences $\bfmX\!=\!\{\!X_1,\ldots, X_N\}$ and $\bfmY\!=\!\{\!Y_1, \ldots, Y_N\}$ generated from two different \emph{unknown} sources are observed. The two sources share the same alphabet $[m]\eqdef \{1, \ldots, m\}$. Given a test sequence $\bfmZ=\{Z_1, \ldots, Z_n\}$, the classifier  decides whether $\bfmZ$ comes from the first source or the second. 

The performance of a classifier is usually assessed by how its probability of classification error depends on $N,n,m$. Since the exactly formula for the probability of error is usually complicated, asymptotic models and performance criteria are used. For example, the classical error exponent criterion characterizes the exponential rate at which the probability of error decays as $N$ and $n$ increase to infinity. In addition to assessing a particular classifier's performance, it is desirable to establish fundamental limits on the best achievable performance. %This is useful to assess tests and usually sheds light on possible way for improvement of a test.  

In many applications such as text classification, the number of training and test samples observed, $N$ and $n$, are much smaller than the size of alphabet $m$. This is the so-called \emph{sparse sample} problem. For example, suppose we want to decide, given two articles written by two different others, which author writes the third article. The number of words appearing in an article is much smaller than the English vocabulary, and the histogram of words is a sparse one \cite{zhaole01p5}. 

%One approach to study this sparse sample classification problem is the high-dimensional setting in which $N,n,m$ all tend to infinity.   
The high-dimensional setting, in which $N,n,m$ all tend to infinity and $m$ is much large than $N,n$, is a widely-used approach to analyze classifiers for the sparse sample problem. A widely-used performance criterion is asymptotic consistency: Given some dependence of $N,n$ on $m$, does the probability of error decay to zero as $m$ increases to infinity? A fundamental result with respect to this criterion was established in \cite{keltulwagvis10p1478}: Assuming that the distribution on all symbols in the alphabet is of order $1/m$, there exists an asymptotic consistent classifier if and only if $m=o(n^2)$. Note that the result is established only for the case $N=n$.

In most practical scenarios, the number of test samples available is smaller than the number of training samples. It is thus desirable to understand how $N$ and $n$ affects the  performance individually. We thus pose the following questions: \begin{enumerate}
\item How fast do $N$ and $n$ need to increase with $m$ in order to have an asymptotic consistent classifier? 
\item Does the probability of error depend on $N$ and $n$ in the same way? 
\item If the number of training samples is limited, can the  performance be improved by having more test samples? %, and vice versa? 
\end{enumerate}

The goal of this paper is to answer these questions by establishing achievability and converse results on best achieble probability of classification error. Our tool is the generalized error exponent analysis technique from \cite{huamey12}. In this prior work, the sparse sample \emph{goodness of fit} problem is investigated in which the number of test samples is much smaller than the size of alphabet. The classical error exponent was extended to this problem via a different scaling in large deviation analysis. 

In the classification problem, the classsical error exponent analysis has been applied to the case of fixed alphabet in \cite{ziv88p278} and \cite{gut89p401}. It was shown that in order for the probability of error to decay exponentially fast with respect to $n$, the number of training samples $N$ must grow at least linearly with $n$. However, in the sparse sample  problem, the classical error exponent concept is again not applicable, and thus a different scaling is needed. 

We identify the appropriate scaling in this paper, and thereby obtain a generalized error exponent to approximate 
the probability of error for large but sparse observations.This analysis yields new insights on the best achievable performance:
\begin{enumerate}
\item The numbers of training and test samples $N,n$ have different effects on the performance, made precise in \Theorem{achievability} and \Theorem{converse}.
\item The $\ell_2$-norm based classifier investigated in \cite{keltulwagvis10p1478}, which compares the $\ell_2$ distances from the empirical distribution of the test sequence to those of the two training sequences, is sub-optimal in that it has zero generalized error exponent,  while a weighted coincidence-based classifier proposed in this paper has a non-zero generalized error exponent. 
\end{enumerate}

{\bf \em Related work:} %In the case of a fixed alphabet, such results on fundamental limits have been established in \cite{ziv88p278} and \cite{gut89p401}: 
Two problems that are closely related to the sparse sample classification problem is the goodness of fit problem and the problem of testing whether two distributions are close. For the goodness of fit problem, achievability and converse results with respect to different criteria have been established in \cite{batfisforkumrubwhi01p442, pan08p4750, erm98p589, bar89p107, huamey12}. For the problem of testing the closeness of two distributions, achievability and converse results with respect to asymptotic consistency have been established in \cite{batforrubsmiwhi00p259, val08p383}.
Our converse result uses the concept of profile in \cite{orlsanjun04p1469}. The results in \cite{orlsanjun04p1469} have lead to algorithms for classification and closeness testing \cite{Acharya10, achjaforlpan11p47}.

\section{Notation and Model}
Consider the following classification problem: Two training sequences $\bfmX$ and $\bfmY$ are generated i.i.d.\ with marginal distributions $\pi$ and $\mu$, respectively. Each symbol takes value in $[m] \eqdef\{1,2,\ldots, m\}$. A test sequence $\bfmZ$ is observed. The sequence $\bfmZ$ is i.i.d.\ with marginal distribution $\pi$ under the null hypothesis $\Ho$ and with marginal $\mu$ under the alternative hypothesis $\Ha$. The three sequences $\bfmX, \bfmY, \bfmZ$ are independent.

Denote the set of probability distributions over $[m]$ by $\bmP$. 
The pair of unknown distributions $(\pi, \mu)$ belongs to the following set $\mP  \subseteq \bmP \times \bmP$,
\begin{equation}
\mP=\{(\pi,\mu): \|\mu-\pi\|_1 \geq  \epsy, \max_{j}\pi_j \leq \frac{\mub}{m}, \max_{j}\mu_j  \leq \frac{\mub}{m}\},\nonumber\\
\end{equation}
where $\mub$ is a large positive constant.
%\[\|\mu-\pi\|_1=2\sup\{|\mu(A)-\pi(A)|:A \subseteq [m]\}.\]
The definition of $\mP$ is essentially the same as the $\alpha$-large-alphabet source defined in \cite{keltulwagvis10p1478}, except that we allow the number of training and test samples to be different. While this assumption that all words are rare does not hold for English vocabulary, the insights and classifiers obtained for rare words will be used to improve the algorithms for the case when there are both frequent and rare words. 

The assumption that $\max_{j}\pi_j \leq \frac{\mub}{m}, \max_{j}\mu_j  \leq \frac{\mub}{m}$ indicates that we are interested in how the existence of a large number of rare symbols affects the performance, and is motivated by the English vocabulary. Extending the results to the case where there are both rare and non-rare symbols is a topic currently under investigation. 

%One may also question the use of $\ell_1$ metric. This is motivated by the form of the large deviation rate function: While the Kullback-Leibler divergence in the fixed alphabet case plays the role of large deviation rate function, it is the normalized $\ell_2$  metric $\sqrt{m}\|\mu-\pi\|_2$ that appears in the rate function for the sparse sample case. Under the assumption on $\pi$ and $\mu$, the $\ell_1$ metric is essentially the same as the normalized $\ell_2$ metric for fixed $\epsy$: $\|\mu-\pi\|_1 \leq \sqrt{m}\|\mu-\pi\|_2 \leq \sqrt{2\mub \|\mu-\pi\|_1}$. 

In the high-dimensional model, we consider a sequence of classification problems as described above, indexed by $m$. Thus $\bmP, N, n, p, q, \mP$ all depend on $m$. Moreover, $N,n$ increase to infinity as $m$ increases. 

A classifier $\bphi=\{\phi_m\}_{m \geq 1}$ is a sequence of binary-valued functions with $\phi_m: [m]^N \times [m]^N \times [m]^n \rightarrow \{0,1\}$. It decides in favor of $\Ha$ if $\phi_m=1$ and $\Ho$ otherwise. Use the notation
$\Prob_{(\mu,\pi,\nu)}(A)$ to denote the probability of the event $A$ when $\bfmX$, $\bfmY$ and $\bfmZ$ have marginal distributions $\mu,\pi,\nu$ respectively. The performance of a classifier $\phi$ is evaluated using the worst-case average probability of error given by 
\[P_e(\phi_m)\!=\!\sup_{(\pi,\mu)\in \mP}\![\half\Prob_{(\pi,\mu,\pi)}\{\phi_m\!=\!1\}+\half\Prob_{(\pi,\mu,\mu)}\{\phi_m\!=\!0\}].\]
It is said to be asymptotically consistent if 
\[\lim_{m \rightarrow \infty} P_e(\phi_m)=0.\]

\section{Asymptotic Consistency}

We begin with the asymptotic consistency result. 
\begin{theorem}\label{t:consistency}
There exists an asymptotically consistent classifier if and only if
\begin{equation}\label{e:consistencyonlyif}
m=o(\min\{N^2,  Nn\}).\nonumber
\end{equation}
\end{theorem}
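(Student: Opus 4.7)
The plan is to prove the equivalence by matching a concrete achievability construction to a Bayesian lower bound for the converse. For the achievability (direction ``$\Leftarrow$''), I would use the bias-corrected $\ell_2$-type classifier that estimates $\|\pi-z\|_2^2$ and $\|\mu-z\|_2^2$ via U-statistic coincidence counters and picks the hypothesis with the smaller estimate. Writing $C_{UV}=|\{(i,j):U_i=V_j\}|$ and $C'_{UU}=|\{(i,j):i\neq j,\ U_i=U_j\}|$, define
\[
\widetilde T \;=\; \frac{C'_{YY}-C'_{XX}}{N(N-1)}\;+\;\frac{2(C_{XZ}-C_{YZ})}{Nn}
\]
and declare $\Ho$ iff $\widetilde T>0$. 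Each of the four terms is unbiased for $\|\mu\|_2^2$, $\|\pi\|_2^2$, $\langle\pi,z\rangle$, $\langle\mu,z\rangle$, so $E_{\Ho}[\widetilde T]=\|\pi-\mu\|_2^2$ and $E_{\Ha}[\widetilde T]=-\|\pi-\mu\|_2^2$, yielding a gap of $2\|\pi-\mu\|_2^2\ge 2\varepsilon^2/m$ by Cauchy--Schwarz on $\|\pi-\mu\|_1\ge\varepsilon$. Using only $\max_k\pi_k,\mu_k\le\mub/m$, Hoeffding's decomposition of each U-statistic bounds the self-coincidence variance as $\mathrm{Var}(C'_{XX}/(N(N-1)))=O(1/(N^2m))$ (the dominant contribution in the sparse regime) and the cross-coincidence variance as $\mathrm{Var}(C_{XZ}/(Nn))=O(1/(Nnm))$; summing, $\mathrm{Var}(\widetilde T)=O(1/(N^2m)+1/(Nnm))$, uniformly on $\mP$. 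Chebyshev at threshold $0$ then gives $P_e(\phi)\lesssim\varepsilon^{-4}(m/N^2+m/(Nn))$, which vanishes precisely under $m=o(\min\{N^2,Nn\})$.

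For the converse (direction ``$\Rightarrow$''), I would use a two-point Bayesian argument in the spirit of \cite{orlsanjun04p1469}. Take $\pi_0=(1/m,\ldots,1/m)$ and, for a uniformly chosen balanced sign vector $\sigma\in\{\pm1\}^m$, let $\mu_\sigma=\pi_0+(\varepsilon/m)\sigma$. Draw an independent fair coin $b$: if $b=0$ the adversary announces $(\pi,\mu)=(\pi_0,\mu_\sigma)$, if $b=1$ then $(\pi,\mu)=(\mu_\sigma,\pi_0)$. Both pairs lie in $\mP$ provided $\mub\ge 1+\varepsilon$, and the prior is invariant under swapping $\Ho\leftrightarrow\Ha$, so the Bayes risk equals half the minimax error. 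Bound the $\chi^2$-divergence between the $\Ho$- and $\Ha$-marginal laws of $(\bfmX,\bfmY,\bfmZ)$ by a second-moment expansion in the Rademacher vector $\sigma$; after cancellations the leading contributions reduce to the coincidence statistics $C'_{XX}$, $C'_{YY}$, $C_{XZ}$, $C_{YZ}$, whose second moments are of order $N^2/m$ and $Nn/m$ respectively. When both $N^2/m$ and $Nn/m$ remain bounded the $\chi^2$-divergence stays $O(1)$, so the Bayes error, and hence the minimax $P_e$, stays bounded away from $0$.

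The hard part will be the converse. The achievability reduces to clean second-moment analysis of U-statistics in the sparse regime plus Chebyshev, and the matching of the two error contributions $m/N^2$ and $m/(Nn)$ to the theorem's rate is immediate. For the converse one needs a careful Rademacher-indexed chi-squared expansion, with the main technical step being to bound the higher-order coincidence terms (triples and quadruples across $\bfmX,\bfmY,\bfmZ$) that appear when the expansion is written out. The profile/Poissonization framework of \cite{orlsanjun04p1469} reduces these moments to expressions that match the achievability rate term by term, so that the threshold $m=\Theta(\min\{N^2,Nn\})$ emerges naturally from the exact same coincidence statistics used by $\widetilde T$.
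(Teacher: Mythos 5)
Your achievability half is correct and is a genuinely different, more elementary route than the paper's. You run a Chebyshev (second-moment) argument on an unbiased U-statistic estimator of $\|\pi-z\|_2^2-\|\mu-z\|_2^2$; the mean gap $2\|\pi-\mu\|_2^2\ge 2\epsy^2/m$ and the variance orders $1/(N^2m)$ and $1/(Nnm)$ check out under $\max_j\pi_j,\mu_j\le\mub/m$, giving $P_e=O(\epsy^{-4}(m/N^2+m/(Nn)))$ uniformly over $\mP$, and in fact this single argument covers all regimes at once, whereas the paper splits into the sparse case (handled by the Chernoff analysis of the weighted coincidence classifier of Section~\ref{sec:weightclassifier}) and the case $m=O(N)$ (handled by estimation or an alphabet-splitting reduction). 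The price is that Chebyshev only yields polynomial decay of $P_e$, which is enough for consistency but not for the paper's achievability theorem on the generalized error exponent.

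The converse has a genuine gap. The negation of $m=o(\min\{N^2,Nn\})$ is that \emph{at least one} of $N^2/m$, $Nn/m$ stays bounded along a subsequence; your argument, as you state it, only treats the case where \emph{both} stay bounded. The regimes that actually force the $\min$ into the statement are exactly the ones you omit: e.g.\ $Nn/m\to\infty$ while $N^2/m=O(1)$ (many informative test--training coincidences, but the training sequences cannot be told apart), and $N^2/m\to\infty$ while $Nn/m=O(1)$. In those regimes a global second-moment bound fails precisely because some coincidence statistics do have diverging second moments even though the hypotheses remain indistinguishable, so the $\chi^2$-divergence you propose to control will in general blow up; moreover, both of your hypotheses are mixtures over $\sigma$, so the standard Ingster-type computation (which needs a fixed reference measure) does not apply directly, and routing through the all-uniform reference $(u,u,u)$ would additionally require $n^2/m=O(1)$, which is not given. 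The paper avoids all of this by proving \emph{two separate} bounds and taking the better one: it restricts to an event $A$ (no repeated symbol in either training sequence) on which the two mixtures assign \emph{exactly} equal probability to every outcome by a profile-symmetry argument (\Lemma{equalprobA}), and lower-bounds $\Prob(A)$ by $\exp\{-O(N^2/m)\}$; and separately to an event $B$ (the test sequence has no repeats and no overlap with the training sequences) giving $\exp\{-O((Nn+n^2)/m)\}$. To repair your converse you would need two such constructions as well --- a single unconditional $\chi^2$ computation cannot deliver the $\min$.
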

\noindent 
\begin{proof}
The sparse sample case where $\max\{N,n\}=o(m)$ is a  corollary of the generalized error exponent analysis results given in \Theorem{achievability} and \Theorem{converse}. 

Now consider the case when $m=O(N)$. The only if direction is trivial. For the if direction, when $m=o(N)$, the distributions of $\bfmX$ and $\bfmY$ can be essentially be estimated with vanishing error since the number of types grows sub-exponentially in $n$ (See \cite[Lemma 3]{keltulwagvis10p1478}). When $m$ is linear in $N$, this problem can be transformed into a (harder) sparse sample problem with alphabet size $m b $ where $b=\lceil \sqrt{\min\{N, n\}} \rceil$: Associate each symbol in $[m]$ with $b$ symbols. Each observation is then randomly mapped to one of the associated symbols. A consistent classifier for the sparse sample  problem leads to a consistent classifier for the original problem. 
\end{proof}
We have a few remarks: 
\begin{enumerate}
\item For the case $N=n$, the conclusion of \Theorem{consistency} is consistent with the results in \cite[Theorem 3 and 4]{keltulwagvis10p1478}.Our proof technique is different. 
%\item We conjecture that \Assumption{Nngrowth}, only used in the proof of the achievability result, is unnecessary.  
\item The requirements on $N$ and $n$ for asymptotic consistency are different: The first requirement $m=o(N^2)$ needs to be satisfied regardless of  how many test samples are available. The second requirement is active only when $n=O(N)$. Therefore, as long as the number of test samples grows linearly with the training samples, further increasing the test samples will not improve the performance in terms of asymptotic consistency. 
\item On the other hand, increasing the number of \emph{training} samples will always increase the performance. The effect of increasing the training samples is different when $n=o(N)$ and $N=o(n)$. 
%\item Note that we have the assumption that $N=o(m)$. The number of test samples required must at least satisfy $\lim_{m \rightarrow \infty} n/(m/N)=\infty$. This is consistent with the well-known fact that it is impossible for the test to be asymptotically consistent if $n=O(1)$.
\end{enumerate}

\section{Generalized Error Exponent}

When $m$ is fixed, the following  error exponent criterion  has been used to evaluate a classifier $\bphi$:
\begin{equation}\label{e:defexponentchern}
\cJ(\phi)  \eqdef -\limsup_{n \rightarrow \infty} \frac{1}{n} \log(P_e(\phi_m)).
\end{equation}
This classical error exponent criterion is no longer applicable in the sparse sample case where 
\begin{assumption0}\label{as:Nngrowth}
$N=o(m), n=o(m)$.
\end{assumption0}
One should consider instead  the following generalization, defined with respect to the normalization $r(N,n,m)$:
\begin{equation}\label{e:defexponent}
J(\phi)  \eqdef -\limsup_{n \rightarrow \infty} \frac{1}{r(N,n,m)}\log(P_e(\phi_m)).
\end{equation}
The results in \Theorem{achievability} and \Theorem{converse} imply that the appropriate normalization is 
\[\vspace{-1mm}r(N,n,m)=\min\{N^2, Nn\}/m.\]
The generalized error exponent $J(\phi)$ could depend on how $N,n$ increase with $m$. Note that to have a consistent classifier, the necessary condition in \Theorem{consistency} must be satisfied, as summarized in the assumption below:
\begin{assumption0}\label{as:Nngrowth2} $m=o(\min\{N^2,  Nn\})$.
\end{assumption0} 
\noindent This is equivalent to $\lim_{m \rightarrow \infty} r(N,n,m)=\infty$.
%it is necessary that $m=o(\min\{N^2, Nn\})$; thus $r(N,n,m)$ tends to infinity. %Note that $m=o(n^2)$ is necessary, since otherwise no test is asymptotically consistent   \cite{pan08p4750}.

The following theorems demonstrate that the definition in \eqref{e:defexponent} is meaningful:
\begin{theorem}[Achievability]\label{t:achievability}
Suppose \Assumption{Nngrowth} and \Assumption{Nngrowth2} hold. Then there exists a classifier $\phi$ such that 
\[J(\phi)>0.\]
\end{theorem}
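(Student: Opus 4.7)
The plan is to exhibit an explicit classifier $\phi$ based on a weighted coincidence statistic and prove $J(\phi)>0$ by a second-moment calculation upgraded to exponential concentration. Let $X_j,Y_j,Z_j$ denote the number of occurrences of symbol $j\in[m]$ in $\bfmX,\bfmY,\bfmZ$, and define
\[
T \;\eqdef\; \sum_{j=1}^{m}Z_j(Y_j-X_j)\;-\;\frac{n}{2(N-1)}\sum_{j=1}^{m}\bigl[Y_j(Y_j-1)-X_j(X_j-1)\bigr],
\]
with classifier $\phi_m=\mathbf{1}\{T>0\}$. The first sum is the coincidence imbalance $C_{YZ}-C_{XZ}$ between the test sequence and the two training sequences, while the falling-factorial term is an unbiased training-only estimator of $\tfrac{Nn}{2}(\|\mu\|_2^2-\|\pi\|_2^2)$. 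Using the identity $\|\nu-\pi\|_2^2-\|\nu-\mu\|_2^2=\|\pi\|_2^2-\|\mu\|_2^2+2\langle\nu,\mu-\pi\rangle$, one checks that $T$ is an unbiased estimator of $\tfrac{Nn}{2}\bigl(\|\nu-\pi\|_2^2-\|\nu-\mu\|_2^2\bigr)$, where $\nu$ is the marginal of $\bfmZ$.

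First, the signal: independence of the three sequences together with $\sum_jX_j=\sum_jY_j=N$ gives
\[
E_{H_0}[T]=-\tfrac{Nn}{2}\|\mu-\pi\|_2^2,\qquad E_{H_1}[T]=+\tfrac{Nn}{2}\|\mu-\pi\|_2^2,
\]
and by Cauchy-Schwarz $\|\mu-\pi\|_2^2\geq\|\mu-\pi\|_1^2/m\geq\epsy^2/m$, so $|E_{H_i}[T]|\geq\tfrac{\epsy^2}{2}\cdot\tfrac{Nn}{m}$ and the threshold $0$ lies at the midpoint of the two conditional means.

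Next, the dispersion. I couple the three multinomial vectors to independent Poisson vectors with rates $N\pi_j,N\mu_j$ and $n\nu_j$; standard Poissonization distorts tail probabilities by only a constant factor. After Poissonization the coordinates decouple, so $T=\sum_j s_j$ with independent $s_j$. Using the elementary Poisson moment identities $E[X(X-1)]=\lambda^{2}$ and $\Var(X(X-1))=4\lambda^{3}+2\lambda^{2}$ together with $\pi_j,\mu_j\leq\mub/m$ and the crude bound $\|\mu-\pi\|_2^2=O(1/m)$, a routine bookkeeping shows that the surviving contributions to $\sum_j\Var(s_j)$ scale as $Nn/m$ from the $T_Z$ part and $n^2/m$ from the falling-factorial correction, so
\[
\Var(T)\;\leq\;C\cdot\frac{Nn+n^{2}}{m}.
\]
Combining with the signal, $(\text{signal})^{2}/\Var(T)\asymp\min\{N^{2},Nn\}/m=r(N,n,m)$ in both the $n\leq N$ and $n>N$ regimes.

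The main obstacle is to lift this second-moment estimate to an exponential tail bound $P_e(\phi_m)\leq\exp(-c\,r(N,n,m))$, since Chebyshev alone only yields $P_e=O(1/r)$ (which would force $J=0$) and classical Bernstein loses a $\log m$ factor through the almost-sure bound $|s_j|=O(\log m)$. The key observation is that in the sparse regime every coincidence count is nearly Poisson: the cross-coincidence totals $\sum_jX_jZ_j$, $\sum_jY_jZ_j$ and the within-training analogues each obey a Chen-Stein Poisson approximation with rate $\Theta(Nn/m)$, for which the sharp tail bound $\Prob(\mathrm{Poi}(\lambda)\geq\lambda+t)\leq\exp\bigl(-t^{2}/(2(\lambda+t))\bigr)$ evaluated at $t\asymp\lambda\asymp Nn/m$ delivers the correct exponent with no spurious logarithm. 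This is precisely the regime the generalized-error-exponent tools of \cite{huamey12} were designed for; applying them componentwise to the (approximately Poisson) coincidence statistics appearing in $T$ and union-bounding gives $P_e(\phi_m)\leq\exp(-c\,r(N,n,m))$ for some $c=c(\epsy,\mub)>0$, and hence $J(\phi)\geq c>0$ as required.
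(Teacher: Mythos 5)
There is a decisive gap: the classifier you construct is, up to a negligible debiasing correction, exactly the $\ell_2$-norm classifier that \Section{sec:l2normclassifier} of the paper proves has \emph{zero} generalized error exponent. Indeed, since $\sum_j X_j=\sum_j Y_j=N$, we have $\sum_j Y_j(Y_j-1)-\sum_j X_j(X_j-1)=\|a^{y}\|_2^2-\|a^{x}\|_2^2$, and a direct computation gives
\[
T \;=\; \tfrac{nN}{2}\,F_n \;-\;\tfrac{n}{2N(N-1)}\bigl(\|a^{y}\|_2^2-\|a^{x}\|_2^2\bigr),
\]
where the second term is $O(n/m)$ typically, negligible against the signal $\Theta(nN/m)$. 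So $\ind\{T>0\}$ inherits the defect of $\phi^F$: the per-symbol contribution $\frac{n}{2(N-1)}X_j(X_j-1)$ is unbounded, and the single event $\{a^{x}_1=\lfloor 4N/\sqrt{m}\rfloor\}$ (the analogue of \eqref{e:constructAn}) shifts $T$ by roughly $8nN/m$, which overwhelms the signal $\tfrac{\epsy^2}{2}\cdot\tfrac{nN}{m}$ and flips the decision with probability $1-o(1)$, while that event itself has probability $\exp\{-\Theta((N/\sqrt{m})\log m)\}$. Whenever $\sqrt{m}\log m=o(N)$ --- a regime fully consistent with Assumptions \ref{as:Nngrowth} and \ref{as:Nngrowth2}, e.g.\ $N=n=m^{3/4}$ --- this decays strictly slower than $\exp\{-c\,r(N,n,m)\}$, forcing $J(\phi)=0$. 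Your own variance bound already hints at the problem: the second moment is controlled, but the distribution of $T$ has heavy (stretched-exponential, not sub-exponential at the right scale) tails, so no concentration argument can rescue this statistic.

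This also explains why your final step cannot be repaired as written. Chen--Stein gives a bound on the \emph{total variation} distance to a Poisson law; a total-variation error $\delta$ is useless for certifying that a tail probability is as small as $e^{-c\lambda}$ with $\lambda=\Theta(Nn/m)\to\infty$, since the true tail could be $e^{-c\lambda}+\delta$ with $\delta$ only polynomially small. What the paper does instead is change the statistic, not the concentration tool: the weighted coincidence statistic $T_n$ of \Section{sec:weightclassifier} uses \emph{indicators} of the events $\{a^{x}_j=2,a^{z}_j=0\}$, $\{a^{x}_j=1,a^{z}_j=1\}$, etc., so each symbol contributes at most $O(1/\min\{N^2,nN\})$ no matter how often it appears; over-represented symbols can no longer dominate. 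With bounded summands the logarithmic moment generating function can be computed directly after Poissonization (\Proposition{lmgfbound}) and the Chernoff bound delivers the exponent $\epsy^4/(160\mub^2)$ with no spurious logarithm. Your mean and variance calculations are correct and correctly identify the scale $r(N,n,m)$, but to prove the theorem you must truncate or replace the quadratic statistic before attempting any exponential bound.
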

\begin{theorem}[Converse]\label{t:converse}
Suppose \Assumption{Nngrowth} holds. There exists a constant $\bar{J}$ such that for any classifier $\phi$,
\[-\log(P_e(\phi_m)) \leq r(N,n,m)\bar{J}.\]
\end{theorem}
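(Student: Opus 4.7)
The plan is a minimax-to-Bayes reduction followed by a distinguishability bound for a carefully chosen symmetric prior. I would lower bound $P_e(\phi_m)$ by the Bayes risk under a prior on $(\pi,\mu) \in \mP$ constructed by random alphabet-relabeling, then bound the total variation---or equivalently, the KL divergence via Bretagnolle--Huber---between the two resulting observation laws to obtain the exponential lower bound on $P_e$.

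First, fix base distributions $(\pi_*,\mu_*) \in \mP$ with $\|\pi_* - \mu_*\|_1 \geq \epsy$, let $\sigma$ be a uniformly random permutation of $[m]$, and consider the prior $\tau$ that draws $(\pi,\mu) = (\pi_* \circ \sigma, \mu_* \circ \sigma)$. Since $\mP$ is permutation-invariant, $\tau$ is supported on $\mP$, and the minimax error is dominated by the Bayes risk: $P_e(\phi_m) \geq \tfrac{1}{2}(1 - \|Q_0 - Q_1\|_{\mathrm{TV}})$, where $Q_b$ is the marginal law of $(\bfmX,\bfmY,\bfmZ)$ in hypothesis $H_b$ under $\tau$. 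By the symmetry of $\tau$, $Q_0$ and $Q_1$ depend on $(\bfmX,\bfmY,\bfmZ)$ only through its joint profile $\{(X_a, Y_a, Z_a)\}_{a \in [m]}$, which in the sparse regime is dominated by singleton counts and low-order pairwise coincidences.

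To cover both regimes of $r(N,n,m) = \min\{N^2, Nn\}/m$, I would use two choices of $(\pi_*,\mu_*)$. For $N > n$ (so $r = Nn/m$), take $\pi_*,\mu_*$ uniform on disjoint halves of $[m]$, so that the only discriminating signal is the handful of cross-coincidences between $\bfmZ$ and $\bfmX,\bfmY$ (expected number $\Theta(Nn/m)$). For $N \leq n$ (so $r = N^2/m$), take $\pi_*,\mu_*$ as small perturbations of the uniform distribution differing on an $\epsy$-fraction of symbols by amounts $O(1/m)$, so the useful signal lives in $\bfmX$--$\bfmY$ coincidences (expected number $\Theta(N^2/m)$). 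The Bretagnolle--Huber inequality gives $P_e \geq \tfrac{1}{4}\exp(-\mathrm{KL}(Q_0,Q_1))$, reducing the goal to showing $\mathrm{KL}(Q_0,Q_1) = O(r(N,n,m))$ for each construction.

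The main obstacle is exactly this KL bound. Naive convexity of KL yields $\mathrm{KL}(Q_0,Q_1) \leq n\,\Expect_\sigma \mathrm{KL}(\pi_\sigma,\mu_\sigma)$, which is hopelessly loose---infinite in the disjoint-support construction and $\Theta(n\epsy^2)$ in the perturbation construction---because it throws away the alphabet-randomization that makes the Bayesian problem hard. The correct route is to bound the chi-squared divergence directly via $\mathrm{KL} \leq \log(1+\chi^2)$, expanding $\chi^2(Q_0,Q_1)$ using the profile symmetry so that it collapses to a sum over coincidence classes of symbols. Showing that this sum reduces to $O(r(N,n,m))$---with the higher-order profile terms subdominant under \Assumption{Nngrowth}---is the central technical step, for which the profile framework of Orlitsky--Santhanam--Zhang (cited in the introduction) is the natural tool.
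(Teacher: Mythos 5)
Your overall strategy---reduce minimax to Bayes under a permutation-symmetric prior, then prove an exponentially small distinguishability bound between the two induced mixtures $Q_0,Q_1$---shares its first half with the paper, which likewise averages over a symmetric family of bi-uniform distributions $\qk$. The execution, however, has concrete gaps. (i) The disjoint-support construction for the $N>n$ regime destroys the divergence route: for every relabeling $\sigma$ the event $\{Z_1=Y_1\}$ is null under $\Ho$ but has probability $2/m$ under $\Ha$, so $Q_0$ and $Q_1$ are mutually non-absolutely-continuous, both KL divergences and $\chi^2$ are infinite, and Bretagnolle--Huber is vacuous. (ii) The perturbation prior does not witness the $N^2/m$ bound when $n\gg N$. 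With $(\pi,\mu)=(\pi_*\circ\sigma,\mu_*\circ\sigma)$ and the two coordinates perturbed on complementary $\epsy$-fractions, a cross-coincidence between $\bfmZ$ and $\bfmX$ occurs with probability about $(1+\epsy^2)/m$ per pair under the matching hypothesis and about $(1-\epsy^2)/m$ otherwise, so thresholding the difference of the $\bfmZ$--$\bfmX$ and $\bfmZ$--$\bfmY$ coincidence counts already achieves error $\exp\{-\Theta(\epsy^4 Nn/m)\}$, which is far below the target $\exp\{-\bar J N^2/m\}$ when $n\gg N$; the signal is not confined to $\bfmX$--$\bfmY$ coincidences as you assert. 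The paper's Step 1 kills the $\bfmZ$-side signal by mixing over \emph{which} training sequence is uniform, pairing $(u,\qk,u)$ under $\Ho$ against $(\qk,u,u)$ under $\Ha$ so that $\bfmZ$ is exactly uniform in both cases and the classifier is forced into a pure goodness-of-fit problem on the training data; a random relabeling of one fixed ordered pair cannot reproduce this. (iii) Even with a correct prior, the step you defer---$\chi^2(Q_0,Q_1)=O(r)$---is the entire proof and is not routine: the Ingster--Suslina second-moment expansion needs a fixed product reference measure in the denominator, whereas here both $Q_0$ and $Q_1$ are mixtures over $\sigma$, and with a mixture in the denominator the sum over coincidence classes does not collapse.

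The paper avoids divergence computations altogether. It exhibits events on which the two mixture likelihoods are \emph{exactly} equal---event $A$ (no symbol repeated within $\bfmX$ or within $\bfmY$) for the $N^2/m$ bound, and event $B$ ($\bfmZ$ repeat-free and disjoint from $\bfmX\cup\bfmY$) for the $(Nn+n^2)/m$ bound---via the profile symmetry of \Lemma{equalprobA} and \Lemma{equalprobB}. On such an event no classifier beats chance, so $P_e$ is lower-bounded by a constant times the probability of the event, which is then computed by elementary combinatorics in \Lemma{lowerboundA} and \Lemma{lowerboundB} to be $\exp\{-\Theta(N^2/m)\}$ and $\exp\{-\Theta((Nn+n^2)/m)\}$ respectively. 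To salvage your plan you would need to replace the global KL bound with this exact-indistinguishability argument, or at least restrict to a high-probability event on which the likelihood ratio of the two mixtures is provably close to one.
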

These theorems imply that the best achievable probability of error decays approximately as $P_e=\exp\{-r(N,n,m)J\}$ for some $J>0$. Note that the probability of error changes exponetially with respect to $n$ only when $n=O(N)$. When $N=o(n)$, the probability of error is mainly determined by the number of training samples. This phenomenon is similar to the case with fixed $m$, for which results in \cite{ziv88p278} show that whether $n=O(N)$ holds determines whether the probability of error decreases exponentially in $n$.

%In the rest of the paper, we will first demonstrate that the well-known $\ell_2$-norm based classifier has a zero generalized error exponent in \Section{sec:l2normclassifier}; we will then propose a classifier that achieves non zero error exponent and prove \Theorem{achievability} in \Section{sec:weightclassifier}. The converse result is proved in \Section{sec:converse}.

\section{$\ell_2$-norm based classifier has a zero generalized error exponent}\label{sec:l2normclassifier}
Let $a^{z}_j$ be the number of times that $j$th symbol appears in $\bfmZ$. The notations $a^{x}$ and $a^{y}$ are defined similarly. 

The $\ell_2$-norm based classifier has the following test statistic:
\[F_n\eqdef \|\frac{1}{n}a^{z}-\frac{1}{N}a^{x}\|_2^2 -\|\frac{1}{n}a^{z}-\frac{1}{N}a^{y}\|_2^2.\]
The classifier is given by
\[\phi^F=\ind\{F_n \geq 0\}.\]
This classifier was shown in \cite{keltulwagvis10p1478} to be asymptotically consistent when $N=n$ and $m=o(N^2)$. 
We now show, however, this classifier has zero generalized error exponent: 
\begin{theorem}
Suppose  \Assumption{Nngrowth} and \Assumption{Nngrowth2} hold and $N=n$. Assume in addition that $m=o(n^2/\log(n)^2)$. Then 
\[J(\phi^F)=0.\]
\end{theorem}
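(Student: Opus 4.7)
The plan is to exhibit an adversarial pair $(\pi,\mu)\in\mP$ defeating the $\ell_2$-norm classifier. Take $\pi$ to be uniform on $[m]$ and let $\mu$ be obtained from $\pi$ by raising $m/2$ coordinates by $\epsy/m$ and lowering the remaining $m/2$ by $\epsy/m$, so that $\|\pi-\mu\|_1=\epsy$ while $\|\pi-\mu\|_2^2=\epsy^2/m$. By Cauchy--Schwarz this is the smallest $\ell_2$ gap allowed by the $\ell_1$ constraint, and therefore the hardest configuration for any classifier based on Euclidean distances.

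Next, I would expand $F_n$: the $(a^z_j/n)^2$ contributions cancel between the two squared distances, leaving
\[
F_n \;=\; T_A + T_B, \qquad T_A \eqdef \frac{2}{n^2}\sum_j a^z_j(a^y_j-a^x_j), \qquad T_B \eqdef \frac{1}{n^2}\sum_j\bigl((a^x_j)^2-(a^y_j)^2\bigr).
\]
A direct moment computation, after Poissonizing the multinomial counts to decouple the $m$ coordinates, gives $\Expect[F_n\mid\Ho]=-\epsy^2/m+O(1/n)$ and $\Expect[F_n\mid\Ha]=+\epsy^2/m+O(1/n)$, so the mean gap is $\Theta(1/m)$. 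Using $\Var(X^2)=\lambda+6\lambda^2+4\lambda^3$ for $X\sim\mathrm{Poi}(\lambda)$ with $\lambda\sim n/m\to 0$, each summand of $T_B$ has variance $\Theta(n/m)$; summing $m$ near-independent coordinate contributions gives $\Var(T_B\mid\Ho)=\Theta(1/n^3)$. A parallel computation gives $\Var(T_A\mid\Ho)=O(1/(n^2m))$, which is $o(1/n^3)$ by \Assumption{Nngrowth}. Hence the dominant noise in $F_n$ is the \emph{training-only} term $T_B$, and the signal-to-noise ratio is $\mathrm{SNR}=\Theta(n^{3/2}/m)$.

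The final step is to turn this SNR into a lower bound on $P_e(\phi^F)$. Under Poissonization $T_B$ is a centered sum of $m$ independent coordinate contributions, so a Berry--Esseen or moderate deviation argument yields, for some constant $C>0$,
\[
P(F_n\ge 0 \mid \Ho)\;\gtrsim\;\exp\bigl(-C\,n^3/m^2\bigr),
\]
hence $-\log P_e(\phi^F)=O(n^3/m^2)+O(\log n)$. Dividing by $r(n,n,m)=n^2/m$ and using $n=o(m)$ from \Assumption{Nngrowth} together with $m=o(n^2/\log^2 n)$ from the hypothesis,
\[
\frac{-\log P_e(\phi^F)}{r(n,n,m)} \;=\; O\!\left(\frac{n}{m}\right)+O\!\left(\frac{m\log n}{n^2}\right)\;\longrightarrow\;0,
\]
so $J(\phi^F)=0$. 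The main obstacle is the moderate-deviation step: the classical Berry--Esseen bound for a sum of $m$ i.i.d.\ Poisson-squared summands has Kolmogorov error of order $n^{-1/2}$, which becomes vacuous once the Gaussian tail drops below that threshold. The role of the condition $m=o(n^2/\log^2 n)$ is to keep the relevant deviation inside the moderate-deviation window where the Gaussian tail estimate is valid, and to ensure the residual polylogarithmic terms stay negligible against $n^2/m$.
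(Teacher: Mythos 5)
There is a genuine gap, and it sits exactly where the theorem's content lies. Your variance computation for $T_B$ is an artifact of Poissonization. Under the actual multinomial sampling, $\sum_j (a^x_j)^2 = n + \sum_j a^x_j(a^x_j-1)$, so the per-coordinate variance term of order $\lambda = n/m$ that you retain (giving $\Var(T_B)=\Theta(1/n^3)$) is precisely the variance of the total count $\sum_j a^x_j$, which is deterministic once you condition on the sample size; Poissonization is only legitimate here if you work with a statistic insensitive to that conditioning, or reinstate the conditioning at the end. The genuine fluctuation is that of the coincidence count $\sum_j a^x_j(a^x_j-1)$, whose per-coordinate variance is $\Theta(\lambda^2)$, giving $\Var(T_B)=\Theta(1/(n^2 m))$ and hence $\mathrm{SNR}=\Theta(n/\sqrt{m})$, not $\Theta(n^{3/2}/m)$. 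With the corrected SNR the Gaussian heuristic predicts $-\log P_e \asymp n^2/m$, which is exactly the order of $r(n,n,m)$ --- so the bulk-fluctuation analysis can at best show $J(\phi^F)\leq C$ for some finite $C$, never $J(\phi^F)=0$. A quick sanity check confirms your intermediate claim $P(F_n\ge 0\mid \Ho)\gtrsim \exp(-Cn^3/m^2)$ cannot be right: for $n^{3/2}\ll m \ll n^2/\log^2 n$ (allowed by all hypotheses) it would force $P_e$ to be bounded away from zero, contradicting the consistency of $\phi^F$ for $m=o(n^2)$ that the paper cites from \cite{keltulwagvis10p1478}.

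The missing idea is that the sub-optimality of $\phi^F$ is a \emph{tail} phenomenon, not a bulk one: the summands of $T_B$ are extremely skewed, and the lower tail of interest is dominated by a single-coordinate ``jackpot'' rather than by collective Gaussian behavior. This is what the paper exploits: it fixes the bi-uniform pair and conditions on the event $C_n$ in \eqref{e:constructAn} that one symbol of a training sequence appears $\lfloor 4n/\sqrt{m}\rfloor$ times. That single count contributes about $16/m$ to the relevant squared norm, overwhelming the signal $\epsy^2/m$ and flipping the decision with probability $1-o(1)$, while $\Prob(C_n)=\exp\{-\Theta((n/\sqrt{m})\log m)\}$. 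The hypothesis $m=o(n^2/\log^2 n)$ enters precisely to guarantee $(n/\sqrt{m})\log m = o(n^2/m)$, i.e.\ that this event is far more likely than $e^{-c\,r(n,n,m)}$. Your own reading of that hypothesis is backwards: shrinking $m$ makes the deviation you need (measured in standard deviations) \emph{larger}, pushing you further outside any Berry--Esseen or moderate-deviation window, so it cannot rescue the Gaussian tail estimate you invoke. To repair the proof you would have to replace the CLT step with an explicit rare-event construction of the paper's type.
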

The sub-optimality of $\phi^F$ is due to the following reason: For any $j$, a large variation of the value of $a^{y}_j$ causes a significant change in the value of the statistic $F_n$. 
Assume $m$ is even for simplicity of exposition. Let $u$ denote the uniform distribution on $[m]$. Let $q_j=(1+\epsy)/m$ for $j \leq m/2$ and $q_j=(1+\epsy)/m$ for $j >m/2$. Consider the case where under $\Ho$, the distribution is given by $(q,u,q)$. 

Considering the following event where one symbol appears many times:
\begin{equation}\label{e:constructAn}
C_n\eqdef \{a^{y}_1=\lfloor 4n/\sqrt{m}\rfloor\},
\end{equation}
we claim that this event is likely to cause a false alarm:
\[\Prob_{(q,u,q)} \{\phi^F=1|C_n\}=1-o(1).\]
On the other hand, the probability of $C_n$ decays slowly: 
\begin{equation}\Prob_{(q,u,q)}(C_n)=\exp\{-4(n/\sqrt{m})\log(m) (1+o(1))\}.
\end{equation}
Combining these two equality gives the lower-bound 
\[\begin{aligned}
\log(P_e(\phi^F)) &\geq \log\bigl(\half \Prob_{(q,u,q)}(C_n)\Prob_{(q,u,q)} \{\phi^F=1|C_n\}\bigr)\\
& =3 4\frac{n}{\sqrt{m}}\log(m)(1+o(1))
\end{aligned}
\]
Thus this error decays at most as $nm^{-\half}\log(m)$, slower than $n^2/m$. Consequently, $J(\phi^F)=0$.

\section{Proof of achievability: weighted coincidence-based classifier}\label{sec:weightclassifier}
A nonzero generalized error exponent is achieved by the following  weighted coincidence-based classifier, whose construction is inspired by the weighted coincidence-based test proposed in \cite{huamey12}. 
Define the test statistic $T_n$:
\begin{equation}
\begin{aligned}
T_n\!=\!\!\sum_{j}\bigl[\!&\frac{1}{N^2}\ind\{a^{x}_j=2, a^{z}_j=0\}+\frac{1}{n^2}\ind\{a^{x}_j=0, a^{z}_j=2\}\\
&-\frac{1}{nN}\ind\{a^{x}_j=1, a^{z}_j=1\}\!+\!\frac{1}{nN}\ind\{a^{y}_j=1, a^{z}_j=1\}\\
&-\frac{1}{n^2}\ind\{a^{y}_j=0, a^{z}_j=2\}\!-\!\frac{1}{N^2}\ind\{a^{y}_j=2, a^{z}_j=0\} \,\,\bigr].
\end{aligned}\nonumber
\end{equation}
The classifier is given by $\phi^T=\ind\{T_n \geq 0\}$. 

\Theorem{achievability} is proved by bounding $P_e(\phi^T)$ via Chernoff: 
\[\begin{aligned}
\log(\Prob_{(\pi,\mu,\pi)}\{\phi^T=1\}) & \leq \inf_{\theta}\Lambda_{(\pi,\mu,\pi)}(\theta).\\
\log(\Prob_{(\pi,\mu,\mu)}\{\phi^T=0\}) & \leq \inf_{\theta}\Lambda_{(\pi,\mu,\mu)}(\theta).
\end{aligned}\]
where $\Lambda_{(\pi,\mu,\nu)}(\theta)=\log \Expect_{(\pi,\mu,\nu)}[\exp(\theta K_n)]$ is the logarithmic moment generating function of $K_n$. The main step is to obtain an asymptotic approximation to $\Lambda_{(\pi,\mu,\nu)}(\theta)$, given in the following proposition:
\begin{proposition}\label{t:lmgfbound}
Let $\theta=\min\{N^2,nN\}\gamma$. For $\gamma=O(1)$,
\begin{equation}
\begin{aligned}
&\Lambda_{(\pi,\mu,\nu)}(\theta)\\
\leq &{\min\{N^2,nN\}}\bigl(\gamma [\sum_{j=1}^m (\half(\pi_j-\nu_j)^2-\half(\mu_j-\nu_j)^2)]\\
&\qquad \qquad\qquad+\gamma^2[\sum_{j=1}^m (\pi_j\nu_j+\mu_j\nu_j)+ \half(\pi_j^2+\mu_j^2)]\bigr)\\
&+O(\frac{\min\{N^2,nN\} \max\{N,n\}}{m^2})+O(1).
\end{aligned}\nonumber%\label{e:lmgfbound}
\end{equation}
%For $\theta=O(1)$, 
%\begin{equation}
%\begin{aligned}
%&\Lambda_{(\pi,\mu,\nu)}(\min\{N^2,nN\}\gamma)\\
%=&\sum_{j=1}^m \bigl[nN\pi_j\nu_j(e^{-\theta /(nN)}-1)+nN\mu_j\nu_j(e^{\theta /(nN)}-1)\\
%&\qquad+\half N^2\pi_j^2(e^{\theta /(N^2)}-1)+\half N^2\mu_j^2(e^{-\theta /(N^2)}-1)\bigr]\\
%&+O(\frac{Nn \max\{N,n\}}{m^2})+O(1).
%\end{aligned}\label{e:lmgf}
%\end{equation}
\end{proposition}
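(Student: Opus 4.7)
The plan exploits the additive structure $T_n = \sum_{j=1}^m T_{n,j}$, where $T_{n,j}$ is the sum of the six signed indicator terms corresponding to symbol $j$, together with the uniform boundedness $|T_{n,j}|\le \max\{1/N^2,1/n^2,1/(nN)\}$. First I would Poissonize the sample sizes by replacing the deterministic $N,N,n$ with independent Poisson variables of the same means; a standard coupling shows this changes $\Lambda_{(\pi,\mu,\nu)}(\theta)$ by at most $O(1)$, which is absorbed into the stated error. Under Poissonization the per-symbol triples $(a^x_j,a^y_j,a^z_j)$ become mutually independent across $j$ (and across sequences, which they already were), so the MGF factorises:
\[
\Lambda_{(\pi,\mu,\nu)}(\theta) \;=\; \sum_{j=1}^m \log\Expect\bigl[e^{\theta T_{n,j}}\bigr] + O(1).
\]

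Next, since $|\theta T_{n,j}|\le|\gamma|=O(1)$, the elementary inequality $e^x\le 1+x+C_\gamma x^2$ for $|x|\le|\gamma|$, together with $\log(1+y)\le y$ for $y>-1$, gives
\[
\log\Expect\bigl[e^{\theta T_{n,j}}\bigr] \;\le\; \theta\,\Expect[T_{n,j}] + C_\gamma\,\theta^2\,\Expect[T_{n,j}^2].
\]
The analysis then reduces to two moment computations. By the independence of $\bfmX,\bfmY,\bfmZ$, every indicator probability factors into a product of two binomial point probabilities, which I would expand using $(1-\pi_j)^N = 1-N\pi_j+O(N^2\pi_j^2)$ and $\binom{N}{2}\pi_j^2 = \half N^2\pi_j^2 - O(N\pi_j^2)$.

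For the first moment the six terms assemble into
\[
\Expect[T_{n,j}] \;=\; \half(\pi_j-\nu_j)^2 - \half(\mu_j-\nu_j)^2 + \epsilon_j,
\]
with $|\epsilon_j|$ of order $\max\{N,n\}(\pi_j+\mu_j+\nu_j)^3$. Summing and applying the sparsity bounds $\pi_j,\mu_j,\nu_j\le\bar\eta/m$ yields $\theta\sum_j\epsilon_j = O(\min\{N^2,nN\}\max\{N,n\}/m^2)$, which is precisely the stated error.

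The main obstacle is the second-moment estimate. Of the $\binom{6}{2}=15$ cross pairs of indicators in $T_{n,j}^2$, only three pairs can co-occur (the two $a^z=0$ indicators, the two $a^z=2$ indicators, and the two $a^z=1$ indicators); the remaining twelve are disjoint by inspection of the defining events. Assembling the diagonals and cross terms reveals a crucial cancellation: the $\nu_j^2/n^2$ contribution from the two $a^z=2$ diagonals cancels against their cross term to leading order, because $\Prob(a^x_j=0,a^z_j=2)+\Prob(a^y_j=0,a^z_j=2)-2\Prob(a^x_j=0,a^y_j=0,a^z_j=2) = O(N(\pi_j+\mu_j)\nu_j^2)$. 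What survives is
\[
\Expect[T_{n,j}^2] \;\le\; \tfrac{\pi_j^2+\mu_j^2}{2N^2} + \tfrac{(\pi_j+\mu_j)\nu_j}{nN} + (\text{smaller}),
\]
which, in each of the cases $n\ge N$ and $n<N$, is dominated by $\min\{N^2,nN\}^{-1}\bigl[(\pi_j+\mu_j)\nu_j + \half(\pi_j^2+\mu_j^2)\bigr]$ by elementary comparisons. Summing against $\theta^2$ delivers the $\gamma^2$ coefficient of the claim, while the cancellation residues $\sum_j(\pi_j+\mu_j)\nu_j^2 = O(1/m^2)$ and cross-term residues $\sum_j\pi_j\mu_j\nu_j = O(1/m^2)$ each contribute errors of at most $O(\min\{N^2,nN\}\max\{N,n\}/m^2)$ that fit within the stated bound.
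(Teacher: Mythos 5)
Your overall strategy---Poissonize, factorize the moment generating function over symbols, and expand each $\log\Expect[e^{\theta T_{n,j}}]$ to second order---is exactly what the paper's (one-line) proof sketch prescribes, and several pieces of your execution are right: the first-moment computation assembling $\half(\pi_j-\nu_j)^2-\half(\mu_j-\nu_j)^2$, the observation that only three of the fifteen indicator pairs can co-occur, and the bookkeeping showing $\theta\sum_j\epsilon_j=O(\min\{N^2,nN\}\max\{N,n\}/m^2)$. The gap is in the case $n<N$, which is precisely the regime the paper singles out as the practically relevant one. There $\theta=nN\gamma$, and on the events $\{a^x_j=0,a^z_j=2\}$ and $\{a^y_j=0,a^z_j=2\}$, which carry weight $1/n^2$, one has $|\theta T_{n,j}|=\gamma N/n$, which is unbounded; your claim $|\theta T_{n,j}|\le|\gamma|$ is valid only when $n\ge N$. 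Hence the inequality $e^x\le 1+x+C_\gamma x^2$ with a constant depending only on $\gamma$ is not available, and the reduction of the $a^z_j=2$ terms to a second-moment estimate is unjustified.

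This is not merely a constant-tracking issue. Even granting a quadratic bound, the residual you retain from the $a^z_j=2$ group is misstated: the cancellation gives $\Prob(a^x_j=0,a^z_j=2)+\Prob(a^y_j=0,a^z_j=2)-2\Prob(a^x_j=0,a^y_j=0,a^z_j=2)=\Theta\bigl(n^2\nu_j^2\,N(\pi_j+\mu_j)\bigr)$ (you dropped the factor $n^2$), so its contribution to $\theta^2\Expect[T_{n,j}^2]$ is $\Theta\bigl(\gamma^2N^3\nu_j^2(\pi_j+\mu_j)\bigr)$, summing to $\Theta(\gamma^2N^3/m^2)$. When $n<N$ this exceeds the stated error $O(nN^2/m^2)$ by a factor $N/n$, and it is not dominated by the main $\gamma^2$ term either once $N^2/(nm)\to\infty$. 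Indeed, working directly with
\begin{equation}
\Expect[e^{\theta T_{n,j}}]\;\ge\;1+\Prob(a^x_j=0,\,a^y_j\neq 0,\,a^z_j=2)\bigl(e^{\gamma N/n}-1\bigr),\nonumber
\end{equation}
one sees the exponential moment genuinely inflates when $N/n$ grows (say $N\asymp n\log m$), so the difficulty is intrinsic to the $a^z_j=2$ indicators and cannot be absorbed by tightening constants. To complete the proof you must treat these two terms by a separate, non-quadratic argument exploiting the smallness of $\Prob(a^z_j=2)$ against the size of $e^{\gamma N/n}$ (or impose a restriction on $N/n$ or on the admissible $\gamma$); as written, the proposal establishes the proposition only for $n\ge N$.
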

%Let $\theta=\min\{N^2,nN\}\gamma$, and consider $|\gamma|\leq 1$. It follows form Taylor series expansion that
%\[e^{\theta/(nN)}-1\leq \theta/nN+2\theta^2/(nN)^2\]
%Similar upper-bound holds for $e^{-\theta/(nN)}-1$, $e^{\theta/(N^2)}-1$, and $e^{-\theta/(N^2)}-1$. 
%Consequently, 
%\begin{equation}
%\begin{aligned}
%&\Lambda_{(\pi,\mu,\nu)}(\theta)\\
%\leq &\theta  [\sum_{j=1}^m (\half(\pi_j-\nu_j)^2-\half(\mu_j-\nu_j)^2)]\\
%&+2\theta^2[\sum_{j=1}^m (\pi_j\nu_j+\mu_j\nu_j)/(nN)+ \half(\pi_j^2+\mu_j^2)/N^2]\\
%&+O(\frac{Nn \max\{N,n\}}{m^2})+O(1)\\
%\leq &\frac{\min\{N^2,nN\}}{m}\bigl(\gamma [\sum_{j=1}^m (\half(\pi_j-\nu_j)^2-\half(\mu_j-\nu_j)^2)]\\
%&\qquad \qquad\qquad\quad+10\gamma^2 \mub^2\bigr)\\
%&+O(\frac{Nn \max\{N,n\}}{m^2})+O(1).
%\end{aligned}\label{e:lmgfbound}
%\end{equation}\
\Proposition{lmgfbound} is obtained using the Poisonnization technique: The distribution of the vector $a_j^x$ is the same as the conditional distribution of a vector of Poisson random variables whose expected values are given by $\lambda \pi$ for some constant $\lambda >0$, conditioned on the event that the sum of these random variables is equal to $N$. The main steps are similar to those used for results in \cite{huamey12}.

Applying \Proposition{lmgfbound} with the Chernoff bound for the cases $\nu=\pi$ and $\nu=\mu$, and using \Assumption{Nngrowth} and \Assumption{Nngrowth2}, and the facts $\pi_j,\mu_j \leq \mub/m$ and $\sum_{j=1}^m (\mu_j-\pi_j)^2 \geq \epsy^2/m$, we obtain
\begin{equation}
\begin{aligned}
\log(\Prob_{\pi,\mu,\pi}\{\phi^T=1\})&\leq -\frac{\epsy^4}{160\mub^2}\frac{\min\{N^2,nN\}}{m}(1+o(1)),\\
\log(\Prob_{\pi,\mu,\mu}\{\phi^T=0\})&\leq -\frac{\epsy^4}{160\mub^2}\frac{\min\{N^2,nN\}}{m}(1+o(1)).
\end{aligned}\nonumber
\end{equation}
Note that the approximation $o(1)$ is uniform over all $(\pi,\mu)\in\mP$. Therefore, \[J \geq \frac{\epsy^4}{160\mub^2}.\]

\section{Proof of converse}\label{sec:converse}
{\bf \em Step 1}: Establish the upper bound, 
\begin{eqnarray}\label{e:conversebound1}
-\log(P_e(\phi_m)) \leq \bar{J}_1N^2/m.
\end{eqnarray}
The main idea of the proof is to consider a event under which observations do not give any information regarding the hypotheses, and lower-bound the probability of such a event. 

We now make this precise. Define the event 
\[
\begin{aligned}
A=\{&\textrm{No symbol in $\bfmX$ appears more than once;}\\
&\textrm{no symbol in $\bfmY$ appears more than once.}\}
\end{aligned}\] 
Assume without loss of generality that $m$ is even. Define a collection of bi-uniform distributions as follows: Let $K_m$ denote the collection of all subsets of $[m]$ whose cardinality is $m/2$. For each set $\omega \in K_m$, define the distribution $\qk$ as
\begin{equation}
\qk_j=\left\{\begin{array}{l l}(1+\epsy)/m, & j \in \omega;\\ (1-\epsy)/m, & j \in [m]\setminus \omega.
\end{array}\right.\end{equation}
Note that $\|u-\qk\|_1=\epsy$, and $(u,\qk) \in \mP$ for all $\omega$.
%Then
%\[\mathcal{Q}=\{\qk: \omega \in K_m\}.\]
%Note that$(u,\qk) \in \mP$ for all $\omega$.

We will use the short-hand notation $\{(\bfmx,\bfmy,\bfmz)\}=\{(\bfmX,\bfmY,\bfmZ)=(\bfmx,\bfmy,\bfmz)\}$ throughout the paper. 

Our choice of the collection of distributions makes sure that the following result holds:
\begin{lemma}\label{t:equalprobA}
For any sequence $(\bfmx,\bfmy,\bfmz) \subseteq A$, % satisfying $\{(\bfmX,\bfmY,\bfmZ)=(\bfmx,\bfmy,\bfmz)\} \subseteq A$,
\[\frac{1}{|K_m|}\!\sum_{\omega \in K_m}\!\Pr_{(u,\qk,u)}(\bfmx,\bfmy,\bfmz)=\frac{1}{|K_m|}\!\sum_{\omega \in K_m}\!\Pr_{(\qk, u,u)}(\bfmx,\bfmy,\bfmz).\]
\end{lemma}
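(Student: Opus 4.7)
The plan is to factor each joint probability and then exploit the permutation invariance of $K_m$. Under $(u,\qk,u)$ the marginals of $\bfmx$ and $\bfmz$ are uniform and contribute a factor $m^{-(N+n)}$ that is independent of $\omega$; symmetrically, under $(\qk,u,u)$ the same factor $m^{-(N+n)}$ comes from the marginals of $\bfmy$ and $\bfmz$. After cancellation the identity reduces to
\[\sum_{\omega \in K_m}(\qk)^N(\bfmy) \;=\; \sum_{\omega \in K_m}(\qk)^N(\bfmx).\]

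Next I would use the defining property of the event $A$: both $\bfmx$ and $\bfmy$ have pairwise distinct entries, so the sets $S_x\eqdef\{x_1,\dots,x_N\}$ and $S_y\eqdef\{y_1,\dots,y_N\}$ satisfy $|S_x|=|S_y|=N$. Since $\qk_j$ takes only the two values $(1\pm\epsy)/m$ according to whether $j\in\omega$, and since $\bfmy$ has no repeated symbols,
\[(\qk)^N(\bfmy) \;=\; m^{-N}(1+\epsy)^{|S_y\cap\omega|}(1-\epsy)^{N-|S_y\cap\omega|},\]
and analogously for $\bfmx$. Setting $f(S)\eqdef\sum_{\omega\in K_m}(1+\epsy)^{|S\cap\omega|}(1-\epsy)^{|S|-|S\cap\omega|}$, the two sides of the reduced identity become $m^{-N}f(S_y)$ and $m^{-N}f(S_x)$ respectively.

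Finally I would argue that $f(S)$ depends on $S$ only through $|S|$. For any permutation $\tau$ of $[m]$ the collection $K_m$ of $m/2$-subsets is invariant under $\tau$, so the substitution $\omega\mapsto\tau^{-1}(\omega)$, combined with the identity $|\tau(S)\cap\tau(\omega)|=|S\cap\omega|$, yields $f(\tau(S))=f(S)$. Choosing $\tau$ that carries $S_y$ onto $S_x$---possible precisely because both sets have cardinality $N$---gives $f(S_y)=f(S_x)$, closing the argument.

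There is no real technical obstacle here; the lemma is a symmetrization identity and the delicate point is simply that the event $A$ plays a double role: it forces $|S_x|=|S_y|$ and it eliminates the $\omega$-dependent contributions that would come from repeated symbols. Without $A$, multiplicities of repeats would produce extra powers of $(1\pm\epsy)$ that break the symmetry between $\bfmx$ and $\bfmy$, which is exactly the intuition that coincidences in the training data are informative about the hypothesis whereas coincidence-free data are not.
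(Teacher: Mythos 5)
Your proof is correct and takes essentially the same route as the paper's: both arguments rest on the permutation invariance of the mixture over $K_m$ together with the observation that the event $A$ forces $\bfmx$ and $\bfmy$ to have the same coincidence-free profile. Your write-up simply makes the paper's profile argument explicit by factoring out the $\omega$-independent uniform terms and reducing everything to the symmetric function $f(S)$.
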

\begin{proof}[Proof sketch for \Lemma{equalprobA}]
For any sequence, let $\varphi_i$ denote the number of symbols appearing $i$ times. The vector $[\varphi_1, \varphi_2, \varphi_3, \ldots]$ is called the \emph{profile} of the sequence \cite{orlsanjun04p1469}. 

Because of the symmetry of the collection of distributions $\{\qk, \omega \in K_m\}$, the symmetry of the uniform distribution $u$, and the independence among $\bfmX,\bfmY,\bfmZ$, the value of $\frac{1}{|K_m|}\sum_{\omega \in K_m}\!\Pr_{(u,\qk,u)}(\bfmx,\bfmy,\bfmz)$ only depends on the profiles of $\bfmx$, $\bfmy$, and $\bfmz$. In the event $A$, the profiles of $\bfmx$ and $\bfmy$ are fixed, which then leads to the claim of the lemma. 
\end{proof}

\Lemma{equalprobA} implies that for any observation $(\bfmx,\bfmy,\bfmz) \in A$, it is impossible to tell whether it is more likely to come from the mixture on the left-hand side or the mixture on the right-hand side. Consequently,
%\begin{equation}
%\begin{aligned}
%&\frac{1}{|K_m|}\!\sum_{\omega }[\Pr_{(u,\qk,u)}(\{\phi_m\!=\!1\}\!\cap\! A)\!+\!\Pr_{(\qk,u,u)}(\{\phi_m\!=\!0\}\!\cap \!A)]\\
%=&\frac{1}{|K_m|}\!\sum_{\omega }[\Pr_{(u,\qk,u)}(\{\phi_m\!=\!1\}\!\cap\! A)\!+\!\Pr_{(u,\qk,u)}(\{\phi_m\!=\!0\}\!\cap\! A)]\\
%=&\frac{1}{|K_m|}\!\sum_{\omega }\Pr_{(u,\qk,u)}(A).
%\end{aligned}\nonumber
%\end{equation}
%A lower-bound on $\frac{1}{|K_m|}\sum_{\omega}\Pr_{(u,\qk,u)}(A)$ in \Lemma{lowerboundA} thus gives a lower-bound on the probability of error:
\begin{equation}
\begin{aligned}
&P_e(\phi_m)\\
\geq&\frac{1}{4|K_m|}\!\sum_{\omega } [\Prob_{(u,\qk,u)}\{\phi_m\!=\!1\}\!+\!\Prob_{(u,\qk,\qk)}\{\phi_m\!=\!0\}]\\
&\!+\! \frac{1}{4|K_m|}\!\sum_{\omega } [\Prob_{(\qk,u,\qk)}\{\phi_m\!=\!1\}\!+\!\Prob_{(\qk,u, u)}\{\phi_m\!=\!0\}]\\
\geq& \frac{1}{4|K_m|}\!\sum_{\omega }[\Pr_{(u,\qk,u)}\{\phi_m\!=\!1\}\!+\!\Pr_{(\qk,u,u)}\{\phi_m\!=\!0\})]\\
\geq& \frac{1}{4|K_m|}\!\sum_{\omega }[\Pr_{(u,\qk,u)}(\{\phi_m\!=\!1\}\!\cap\! A)\!+\!\Pr_{(\qk,u,u)}(\{\phi_m\!=\!0\}\!\cap \!A)]\\
=&\frac{1}{4|K_m|}\!\sum_{\omega }[\Pr_{(u,\qk,u)}(\{\phi_m\!=\!1\}\!\cap\! A)\!+\!\Pr_{(u,\qk,u)}(\{\phi_m\!=\!0\}\!\cap\! A)]\\
=& \frac{1}{4|K_m|}\!\sum_{\omega}\Pr_{(u,\qk,u)}(A).
\end{aligned}\label{e:PeboundbyPA}
\end{equation}
where the first inequality follows from the fact that the maximum is no smaller than the average, and the second last inequality follows from \Lemma{equalprobA}. The probability of the event $A$ can be lower-bounded. 
\begin{lemma}\label{t:lowerboundA}
%The value of  $\Pr_{(u,\qk,u)}(A)$ does not depend on $\omega$. Moreover, 
The following approximations holds uniformly for any $\omega$:
\[\log\bigl(\Pr_{(u,\qk,u)}(A)\bigr)=-(1+\half \epsy^2)\frac{N^2}{m}(1+o(1))+O(1).\]
\end{lemma}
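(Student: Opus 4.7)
My plan is to use the independence of $\bfmX$ and $\bfmY$ (and the absence of $\bfmZ$ in the event $A$) to factor
\[\Pr_{(u,\qk,u)}(A) = \Pr_u(\text{no repeat in }\bfmX) \cdot \Pr_{\qk}(\text{no repeat in }\bfmY),\]
and treat each factor separately. The uniform factor is the classical birthday probability $\prod_{k=0}^{N-1}(1-k/m)$; taking logs and Taylor-expanding (legitimate uniformly since $N/m = o(1)$ by \Assumption{Nngrowth}), this contributes $-\tfrac{N^2}{2m}(1+o(1))$, with higher-order error $O(N^3/m^2) = o(N^2/m)$.

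For the $\qk$ factor I would Poissonize: introduce independent $M_j \sim \mathrm{Poisson}(N\qk_j)$, set $A_P = \{M_j \leq 1 \text{ for all } j\}$ and $S = \sum_j M_j$, and use the standard identity $\Pr_{\qk}(\text{no repeat}) = \Pr(A_P \mid S = N)$ to write
\[\log \Pr_{\qk}(\text{no repeat}) = \log \Pr(A_P) + \log \Pr(S = N \mid A_P) - \log \Pr(S = N).\]
The first term is explicit: $\Pr(A_P) = e^{-N}\prod_j(1+N\qk_j)$, and expanding $\log(1+N\qk_j)$ (valid since $N\qk_j = o(1)$) together with $\sum_j \qk_j = 1$ and $\|\qk\|_2^2 = (1+\epsy^2)/m$ yields $\log \Pr(A_P) = -\tfrac{N^2(1+\epsy^2)}{2m}(1+o(1))$. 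The remaining ratio $\Pr(S=N\mid A_P)/\Pr(S=N)$ is the key technical step: conditional on $A_P$, $S$ is a sum of independent Bernoullis with mean $N - N^2(1+\epsy^2)/m + O(N^3/m^2)$ and variance $\Theta(N)$, so a local CLT gives $\Pr(S = N \mid A_P)$ of order $1/\sqrt N$ up to a Gaussian exponent of magnitude $O(N^3/m^2) = o(N^2/m)$, while $\Pr(S=N) \sim 1/\sqrt{2\pi N}$ by Stirling. The $\tfrac{1}{2}\log N$ contributions cancel and the ratio contributes only $O(1)$ to the log probability.

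Summing the two factors yields
\[\log \Pr_{(u,\qk,u)}(A) = -\tfrac{N^2}{2m}\bigl[1 + (1+\epsy^2)\bigr](1+o(1)) + O(1) = -(1+\tfrac{1}{2}\epsy^2)\tfrac{N^2}{m}(1+o(1)) + O(1),\]
as claimed. Uniformity in $\omega$ is automatic because the estimate depends on $\qk$ only through the $\omega$-invariant quantities $\sum_j \qk_j$ and $\|\qk\|_2^2$. The step I expect to be trickiest is the local CLT for the conditional Bernoulli sum: one must verify that the Gaussian exponent $(N - \Expect[S\mid A_P])^2/\Var(S\mid A_P) = O(N^3/m^2)$ is genuinely $o(N^2/m)$ rather than only $o(1)$, which is precisely where the hypothesis $N = o(m)$ is essential.
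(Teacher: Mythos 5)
Your decomposition is the same as the paper's: factor $\Pr_{(u,\qk,u)}(A)$ by independence, evaluate the uniform factor as the birthday product $\prod_{k=0}^{N-1}(1-k/m)=\exp\{-\frac{N^2}{2m}(1+o(1))\}$, and show the $\qk$ factor contributes $-\frac{N^2}{2m}(1+\epsy^2)(1+o(1))$ via $\|\qk\|_2^2=(1+\epsy^2)/m$; the paper's sketch leaves this second factor as ``similar but more involved,'' and your Poissonization is a legitimate concrete way to carry it out (it is also the device the paper uses for Proposition~\ref{t:lmgfbound}). One refinement is needed at the step you yourself flag: since $N-\Expect[S\mid A_P]\asymp N^2/m$ while $\sqrt{\Var(S\mid A_P)}\asymp\sqrt{N}$, and the assumptions permit $N^2/m\gg\sqrt{N}$ (e.g.\ $m=N^{1.1}$), you are in a moderate-deviations regime where a plain local CLT does not literally apply; you need a local moderate-deviations estimate (tilting or an Edgeworth-type expansion), whose higher-cumulant corrections are $O(N^4/m^3)=o(N^2/m)$ and hence harmless. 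With that substitution the argument is complete and the uniformity in $\omega$ follows as you say, since only $\sum_j\qk_j$, $\|\qk\|_2^2$, and $\max_j\qk_j$ enter.
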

\begin{proof}[Proof sketch]
It follows from a combinatorial argument that the probability that no symbol appears twice in $\bfmX$ when $\bfmX$ has marginal distribution $u$ is given by 
\[m (m-1) \ldots (m-N+1) (1/m)^N=\exp\{-\half\frac{N^2}{m}(1+o(1))\}. \]
Estimating the probability that no symbol appears twice in $\bfmY$ can be done similarly but is more involved.
\end{proof}

The claim \eqref{e:conversebound1} follows from applying \Lemma{lowerboundA} to \eqref{e:PeboundbyPA}, and picking a large enough  $\bar{J}$.

{\bf \em Step 2}: Establish the second upper-bound
\begin{eqnarray}\label{e:conversebound2}
-\log(P_e(\phi_m) \leq \bar{J}_2(Nn+n^2)/m.
\end{eqnarray}
We consider the following event:
\[
\begin{aligned}
B=\{&\textrm{No symbol in $\bfmZ$ appears more than once;}\\
&\textrm{no symbol in $\bfmZ$ has appeared in either $\bfmX$ or $\bfmY$}\}.
\end{aligned}\]
%\[B=\{\textrm{No symbol in $\bfmZ$ has appeared in either $\bfmX$ or $\bfmY$}\}.\]
When this event happens, it is impossible (in the worst-case setting) to infer which distribution the test sequence is more likely to be generated from. This is captured by the following lemma:
\begin{lemma}\label{t:equalprobB}
Consider any $\bfmx,\bfmy$. For any two sequences $\bfmz$ and $\bar{\bfmz}$ such that $(\bfmx,\bfmy,\bfmz) \subseteq B$ and $(\bfmx,\bfmy,\bar{\bfmz}) \subseteq B$, the following holds:
% satisfying $\{(\bfmX,\bfmY,\bfmZ)=(\bfmx,\bfmy,\bfmz)\} \subseteq A$,
\[\frac{1}{|K_m|}\!\sum_{\omega \in K_m}\!\Pr_{(u,\qk,u)}(\bfmx,\bfmy,\bfmz)=\frac{1}{|K_m|}\!\sum_{\omega \in K_m}\!\Pr_{u, \qk,u}(\bfmx,\bfmy,\bar{\bfmz}).\]
\[\frac{1}{|K_m|}\!\sum_{\omega \in K_m}\!\Pr_{(u,\qk,\qk)}(\bfmx,\bfmy,\bfmz)=\frac{1}{|K_m|}\!\sum_{\omega \in K_m}\!\Pr_{u, \qk,\qk}(\bfmx,\bfmy,\bar{\bfmz}).\]
\end{lemma}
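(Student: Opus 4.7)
The plan is to deduce both equalities from a single symmetry argument: for $\bfmx,\bfmy,\bfmz,\bar{\bfmz}$ satisfying the hypotheses, I would produce a permutation $\sigma$ of $[m]$ that fixes every symbol appearing in $\bfmx$ or $\bfmy$ while satisfying $\sigma(z_i)=\bar{z}_i$ coordinate-wise, and then exploit the permutation invariance of $u$ together with the fact that the family $\{\qk:\omega\in K_m\}$ is closed under the induced action $\omega\mapsto\sigma(\omega)$ on $K_m$.

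The first equality is essentially immediate once one writes out the probability: under $(u,\qk,u)$,
\[
\Pr_{(u,\qk,u)}(\bfmx,\bfmy,\bfmz)=m^{-(N+n)}\qk(\bfmy),
\]
which does not depend on $\bfmz$ at all, so the pre-averaged probabilities at each $\omega$ already coincide for $\bfmz$ and $\bar{\bfmz}$. For the second equality I would construct $\sigma$ as follows: let $S$ denote the set of symbols appearing in $\bfmx$ or $\bfmy$; event $B$ guarantees that $\{z_1,\ldots,z_n\}$ and $\{\bar{z}_1,\ldots,\bar{z}_n\}$ are each sets of $n$ distinct symbols disjoint from $S$. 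The partial map sending $z_i\mapsto\bar{z}_i$ and acting as the identity on $S$ is then well-defined and injective on its domain of size $|S|+n$, lands in a set of the same size, and therefore extends to a permutation $\sigma$ of $[m]$ by choosing any bijection on the complement.

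Given such $\sigma$, for each fixed $\omega$ one has $u\circ\sigma^{-1}=u$ and $\qk\circ\sigma^{-1}=q^{\sigma(\omega)}$, so using $\sigma(\bfmx)=\bfmx$, $\sigma(\bfmy)=\bfmy$, $\sigma(\bfmz)=\bar{\bfmz}$ yields
\[
\Pr_{(u,q^{\sigma(\omega)},q^{\sigma(\omega)})}(\bfmx,\bfmy,\bar{\bfmz})=\Pr_{(u,\qk,\qk)}(\bfmx,\bfmy,\bfmz).
\]
Summing over $\omega\in K_m$, reindexing by $\omega'=\sigma(\omega)$ (which is a bijection of $K_m$ onto itself), and dividing by $|K_m|$ delivers the second identity. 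The analogous argument with the marginals $(u,\qk,u)$ re-derives the first identity without even invoking $B$.

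The main obstacle is the verification that the prescribed partial map really extends to a permutation of $[m]$: this is exactly where event $B$ is essential, since collisions between $\bfmz$ and $S$, or repetitions within $\bfmz$ or within $\bar{\bfmz}$, would force conflicting values of $\sigma$. Once $\sigma$ is produced, the remainder is a direct permutation-symmetry argument completely analogous to the one underlying \Lemma{equalprobA}.
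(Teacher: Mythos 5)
Your proof is correct and follows essentially the same route as the paper: the paper's appeal to the invariance of the averaged probability under the profile of $\bfmz$ is exactly the permutation symmetry you make explicit by constructing $\sigma$ (fixing the symbols of $\bfmx,\bfmy$ and carrying $z_i\mapsto\bar z_i$, which event $B$ guarantees is extendable to a bijection of $[m]$) and reindexing the sum over $K_m$ via $\omega\mapsto\sigma(\omega)$. Your observation that the first identity holds termwise because $\Pr_{(u,\qk,u)}(\bfmx,\bfmy,\bfmz)$ does not depend on $\bfmz$ is a valid minor simplification.
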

\begin{proof}[Proof sketch for \Lemma{equalprobB}]
Since no symbols in $\bfmz$ have appeared in $\bfmx$ and $\bfmy$, due to the symmetry of the collection of distributions $\{\qk, \omega \in K_m\}$ and the symmetry of the uniform distribution $u$, for fixed $\bfmx$ and $\bfmy$, the value of $\frac{1}{|K_m|}\!\sum_{\omega \in K_m}\!\Pr_{(u,\qk,\qk)}(\bfmx,\bfmy,\bfmz)$ only depends on the profile of $\bfmz$. It follows from the definition of the event $B$ that the profile of $\bfmz$ is the same as the profile of $\bar{\bfmz}$. 
\end{proof}

The result  of \Lemma{equalprobB} can interpreted as follows: In the event $B$, observing $\bfmZ$ does not gives any information since under either hypothesis, each sequence $\bfmz$ appears with equal probability. % the same probability to appear. 
 
Consider any $\bfmx,\bfmy$. Let $D_{\bfmx,\bfmy}=\{\bfmz: (\bfmx,\bfmy,\bfmz) \in \{\phi_m=1\}\cap B\}$ and $D^c_{\bfmx,\bfmy}=\{\bfmz: (\bfmx,\bfmy,\bfmz) \in \{\phi_m=0\}\cap B\}$. \Lemma{equalprobB} implies that the probability of $\{\bfmX=\bfmx,\bfmY=\bfmy, \phi_m=1\}\cap B$ only depends on the size of $D_{\bfmx,\bfmy}$, rather than what sequences the set $D_{\bfmx,\bfmy}$ includes. Consequently, 
% \Let $B_{\bfmx,\bfmy}=\{(\bfmx,\bfmy,\bfmz) \in B\}$. 
We then have%It follows from \Lemma{equalprobB} that
\begin{equation}
\begin{aligned}
&\frac{1}{|K_m|}\!\sum_{\omega }\bigl[\Pr_{(u,\qk,u)}(\{\bfmX=\bfmx,\bfmY=\bfmy, \phi_m=1\}\cap B)\\
&\qquad\qquad+\Pr_{(u,\qk,\qk)}(\{\bfmX=\bfmx,\bfmY=\bfmy, \phi_m=0\}\cap B)\bigr]\\
=&\bigl[\frac{1}{|K_m|}\!\sum_{\omega }\Pr_{(u,\qk,u)}(\{\bfmX\!=\!\bfmx,\bfmY\!=\!\bfmy\}\!\cap\! B)\bigr]\frac{|D_{\bfmx,\bfmy}|}{D_{\bfmx,\bfmy}\!+\!D^c_{\bfmx,\bfmy}}\\
&+\!\bigl[\frac{1}{|K_m|}\!\sum_{\omega }\!\Pr_{(u,\qk,\qk)}\!(\{\bfmX\!=\!\bfmx,\bfmY\!=\!\bfmy\}\!\cap\! B)\bigr]\frac{|D^c_{\bfmx,\bfmy}|}{D_{\bfmx,\bfmy}\!+\!D^c_{\bfmx,\bfmy}}\\
\geq & \frac{1}{|K_m|}\min\bigl\{\sum_{\omega }\Pr_{(u,\qk,u)}(\{\bfmX\!=\!\bfmx,\bfmY\!=\!\bfmy\}\!\cap\! B),\\
&\qquad\qquad\quad\sum_{\omega }\Pr_{(u,\qk,\qk)}\!(\{\bfmX\!=\!\bfmx,\bfmY\!=\!\bfmy\}\!\cap\! B)\bigr\},
\end{aligned}\label{e:PeboundbyPB}
\end{equation}
where the inequality follows from lower-bounding the probability of $\{\bfmX\!=\!\bfmx,\bfmY\!=\!\bfmy\}\!\cap\! B$ under $(u,\qk,u)$ and $(u,\qk,\qk)$ by the minimum of these two. 

%equality follows from \Lemma{equalprobB} which implies that the probability of $\phi_m=1$ for a given $\bfmx,\bfmy$ and under event $B$
%\begin{equation}
%%\frac{1}{|K_m|}\!\sum_{\omega }[\Pr_{(u,\qk,u)}\!(\{\phi_m\!=\!1\}\!\cap \!B)=\frac{|{}
%\begin{aligned}
%&\frac{1}{|K_m|}\!\sum_{\omega }[\Pr_{(u,\qk,u)}\!(\{\phi_m\!=\!1\}\!\cap \!B)\!+\!\Pr_{(u, \qk,\qk)}\!(\{\phi_m\!=\!0\}\!\cap \!B)]\\
%=&\frac{1}{|K_m|}\!\sum_{\omega }[\Pr_{(u,\qk,u)}(\{\phi_m\!=\!1\}\!\cap \!B)\!+\!\Pr_{(u,\qk,u)}(\{\phi_m\!=\!0\}\!\cap\! B)]\\
%=&\frac{1}{|K_m|}\!\sum_{\omega }\Pr_{(u,\qk,u)}(B).\nonumber
%\end{aligned}
%\end{equation}
%Consequently,
%\begin{equation}
%\begin{aligned}
%&\frac{1}{|K_m|}\!\sum_{\omega }[\Pr_{(u,\qk,u)}\!(\{\phi_m\!=\!1\}\!\cap \!B)\!+\!\Pr_{(u, \qk,\qk)}\!(\{\phi_m\!=\!0\}\!\cap \!B)]\\
%=&\frac{1}{|K_m|}\!\sum_{\omega }[\Pr_{(u,\qk,u)}(\{\phi_m\!=\!1\}\!\cap \!B)\!+\!\Pr_{(u,\qk,u)}(\{\phi_m\!=\!0\}\!\cap\! B)]\\
%=&\frac{1}{|K_m|}\!\sum_{\omega }\Pr_{(u,\qk,u)}(B).\nonumber
%\end{aligned}
%\end{equation}
%Therefore,\begin{equation}
%\begin{aligned}
%&P_e(\phi_m)\\
%\geq& \frac{1}{2|K_m|}\!\sum_{\omega }[\Pr_{(u,\qk,u)}\!(\!\{\phi_m\!=\!1\}\!\cap\! B)\!+\!\!\Pr_{(u, \qk,\qk)}\!(\!\{\phi_m\!=\!0\}\!\cap\! B)]\\
%=& \frac{1}{2|K_m|}\!\sum_{\omega}\Pr_{(u,\qk,u)}(B).
%\end{aligned}\label{e:PeboundbyPB}
%\end{equation}
\begin{lemma}\label{t:lowerboundB}
Let $\bar{J}_2=5$. Then the following bounds hold uniformly over all $\omega, \bfmx,\bfmy$:
\[\log\!\bigl[\!\frac{\Pr_{(u,\qk,u)}\!(\!\{\bfmX\!=\!\bfmx,\bfmY\!=\!\bfmy\}\!\cap\! B)}{\Pr_{(u,\qk,u)}\!\{\bfmX\!=\!\bfmx,\bfmY\!=\!\bfmy\}}\!\bigr]\!\geq\! \bar{J}_2\frac{N\!n\!+\!n^2}{m}(1+o(1)).\]
\[\log\!\bigl[\!\frac{\Pr_{(u,\qk,\qk)}\!(\!\{\bfmX\!=\!\bfmx,\bfmY\!=\!\bfmy\}\!\cap\! B)}{\Pr_{(u,\qk,\qk)}\!\{\bfmX\!=\!\bfmx,\bfmY\!=\!\bfmy\}}\!\bigr]\!\geq\! \bar{J}_2\frac{N\!n\!+\!n^2}{m}(1+o(1)).\]
%\[\begin{aligned}
%&\log\big(\frac{\Pr_{(u,\qk,u)}(\{\bfmX\!=\!\bfmx,\bfmY\!=\!\bfmy\}\!\cap\! B)}{\Pr_{(u,\qk,u)}\{\bfmX\!=\!\bfmx,\bfmY\!=\!\bfmy\}})\\
%\geq& -2\frac{Nn+0.25(1+\epsy^2)n^2}{m}(1+o(1)).
%\end{aligned}
%\]
%\[\begin{aligned}
%&\log\big(\frac{\Pr_{(u,\qk,\qk)}(\{\bfmX\!=\!\bfmx,\bfmY\!=\!\bfmy\}\!\cap\! B)}{\Pr_{(u,\qk,\qk)}\{\bfmX\!=\!\bfmx,\bfmY\!=\!\bfmy\}})\\
%\geq& -2\frac{Nn+0.25(1+\epsy^2)n^2}{m}(1+o(1)).
%\end{aligned}
%\]
%The value of  $\Pr_{(u,\qk,u)}(B)$ does not depend on $\omega$. Moreover, the following bound holds:
%\[\log\bigl(\Pr_{(u,\qk,u)}(B)\bigr)\geq -2\frac{Nn+0.25(1+\epsy^2)n^2}{m}(1+o(1)).\]
\end{lemma}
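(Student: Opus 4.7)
My plan is to observe that the ratio on the left-hand side of each inequality is exactly the conditional probability $\Pr(B \mid \bfmX = \bfmx, \bfmY = \bfmy)$, so the claim (read with the natural sign convention) is an upper bound on $-\log \Pr(B \mid \bfmx, \bfmy)$. Given the observed training sequences, let $S_0 = \{x_1, \ldots, x_N\} \cup \{y_1, \ldots, y_N\} \subseteq [m]$, so $|S_0| \leq 2N$. Under this conditioning, $B$ is precisely the event that $Z_1, \ldots, Z_n$ are pairwise distinct and all lie in $[m] \setminus S_0$. Since $\bfmZ$ is independent of $(\bfmX, \bfmY)$, the problem reduces to a birthday-style calculation depending only on the marginal of $\bfmZ$ and on the cardinality $|S_0|$.

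For the first bound, $\bfmZ$ has marginal $u$, and direct enumeration gives
$$\Pr_{(u,\qk,u)}(B \mid \bfmx, \bfmy) \;=\; \frac{(m-|S_0|)(m-|S_0|-1)\cdots(m-|S_0|-n+1)}{m^n} \;\geq\; \prod_{i=0}^{n-1}\Bigl(1 - \frac{2N+i}{m}\Bigr).$$
Under \Assumption{Nngrowth}, $(2N+n)/m = o(1)$ uniformly, so the term-by-term expansion $\log(1-x) = -x(1+o(1))$ applies, yielding
$$\log \Pr_{(u,\qk,u)}(B \mid \bfmx, \bfmy) \;\geq\; -\sum_{i=0}^{n-1}\frac{2N+i}{m}(1+o(1)) \;=\; -\frac{2Nn + n^2/2}{m}(1+o(1)).$$
The coefficients $2$ and $1/2$ are comfortably absorbed by $\bar{J}_2 = 5$ against the bound $(Nn + n^2)/m$.

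For the second bound, $\bfmZ$ has marginal $\qk$. I use a sequential decomposition: conditional on $Z_1, \ldots, Z_{i-1}$ being distinct elements of $[m]\setminus S_0$, the probability that $Z_i$ extends this sequence equals $1 - \sum_{j \in S_0 \cup \{Z_1, \ldots, Z_{i-1}\}} \qk_j$. Using the pointwise bound $\qk_j \leq (1+\epsy)/m$ from the definition of $\mP$, this is at least $1 - (2N + i - 1)(1+\epsy)/m$. Multiplying over $i$ and linearizing the logarithm as before,
$$\log \Pr_{(u,\qk,\qk)}(B \mid \bfmx, \bfmy) \;\geq\; -(1+\epsy)\frac{2Nn + n^2/2}{m}(1+o(1)),$$
which is still absorbed by $\bar{J}_2 = 5$ since $\epsy \leq 2$.

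The bounds are uniform in $\omega$, $\bfmx$, $\bfmy$ since the only inputs are the cardinality bound $|S_0| \leq 2N$ and the pointwise mass bound $\qk_j \leq (1+\epsy)/m$, both of which are deterministic. I do not expect any serious obstacle: the only analytic step is the uniform linearization $\log(1-x) = -x(1+o(1))$, whose validity follows immediately from \Assumption{Nngrowth}, and the lemma only requires existence of a finite constant, so no careful optimization of $\bar{J}_2$ is needed. The lemma is simpler than \Lemma{lowerboundA} because here the conditioning on $(\bfmX, \bfmY)$ fixes the forbidden symbol set $S_0$ outright, whereas in \Lemma{lowerboundA} one must average the non-collision probability of $\bfmY$ over the bi-uniform ensemble $\{\qk\}$.
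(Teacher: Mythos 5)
Your proposal is correct and follows essentially the same route the paper intends: the ratio is exactly $\Pr(B\mid\bfmx,\bfmy)$, and the paper's own proof is just a pointer to the birthday-type product bound used for \Lemma{lowerboundA}, which is precisely your calculation. The only quibble is your closing justification ``$\epsy\le 2$'': with $\epsy=2$ the bound $(1+\epsy)(2Nn+\half n^2)/m$ can exceed $5(Nn+n^2)/m$ when $N\gg n$; what saves the constant is that the construction of $\qk$ forces $\epsy\le 1$ (so that $(1-\epsy)/m\ge 0$), giving $(1+\epsy)(2Nn+\half n^2)\le 4Nn+n^2\le 5(Nn+n^2)$.
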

\noindent The proof is similar to that of \Lemma{lowerboundA}.

Note that the average probability of error is equal to the summation of the left-hand side of \eqref{e:PeboundbyPB} over all possible $(\bfmx,\bfmy)$. Applying \Lemma{lowerboundB} to lower-bound the right-hand side of \eqref{e:PeboundbyPB} leads to the claim. 

We now combine \eqref{e:conversebound1} and \eqref{e:conversebound2}. It is straightforward to verify that 
\[\min\{N^2, Nn+n^2\} \leq \min\{N^2, 2Nn\}.\]
Taking $\bar{J}=\max\{\bar{J}_1,2\bar{J}_2\}$ leads to the claim of the theorem.

%\vspace{2cm}

\section{Conclusions and Future Work}
%We have studied binary classification when the size of the underlying alphabet $m$ is larger than the number of training samples $N$ and test samples $n$. We show that there is an asymptotically consistent classifier if and only if $m=o(\min\{N^2,Nn\})$. Moreover, we characterize the rate of convergence using generalized error exponent: The best achievable probability of error is $P_e=\exp\{- J \min\{N^2,Nn\} (1+o(1)\}$. The results shed light on the different roles played by the training samples and test samples. We propose a weighted coincidence-based classifier that achieves $J>0$, and also show that the known $\ell_2$-norm based classifier has zero generalized error exponent. 

%The above results are established for the case $N,n=o(m)$, and also that all symbols are rare, i.e., the probability of observing any symbols is on the same order $m^{-1}$. We plan to address the following directions in future work:
We have investigated the binary classification problem with sparse samples using generalized error exponent concept, and established fundamental performance limits. We have proposed a classifier that performs better than the $\ell_2$-norm based classifier. Future directions include:
\begin{enumerate}
\item Investigate classification algorithms that are applicable when there are both rare and frequent symbols.
\item The generalized error exponent analysis could be  applicable to the problem of testing closeness of  distributions. 
%Given the connection to the problem of testing closeness of two distributions, it is desirable to extend the generalized error exponent analysis to this problem. 
\end{enumerate}

%\begin{spacing}{0.4}
{
\bibliographystyle{IEEEtran}
\bibliography{IEEEfull,DayuStringFull,Dayu_bibtex}
}
%\end{spacing}
\end{document}